\documentclass[letterpaper, 10 pt, conference]{ieeeconf}
\IEEEoverridecommandlockouts    
\usepackage{graphicx}
\usepackage{subfig}

\usepackage[T1]{fontenc}
\usepackage[latin9]{inputenc}

\newtheorem{lemma}{Lemma}
\newtheorem{proposition}{Proposition}
\newtheorem{corollary}{Corollary}
\newtheorem{theorem}{Theorem}
\newtheorem{assumption}{Assumption}

\newtheorem{claim}{Claim}
\usepackage{cite}
\usepackage{amsmath}
\usepackage{amstext}
\usepackage{amssymb}
\usepackage{tikz}
\usepackage[mathscr]{euscript}

\usepackage{amsfonts}
\usepackage{enumerate}
\usepackage[final]{pdfpages}
\usepackage{csquotes}
\usepackage{tikz}
\usetikzlibrary{arrows.meta,shapes.geometric}

\let\labelindent\relax
\DeclareOldFontCommand{\rm}{\normalfont\rmfamily}{\mathrm}

\usepackage{graphicx}      
\usepackage{enumitem}
\usepackage{mathdots}

\usepackage{mathtools,xparse}

\usepackage{hyperref}
 \hypersetup{
    colorlinks=true,
    linkcolor=magenta,
    filecolor=magenta,
    urlcolor=cyan,
    citecolor=blue,
}
\DeclareMathOperator{\diag}{\texttt{diag}}
\definecolor{ForestGreen}{RGB}{34,139,34}

\DeclareMathOperator{\sgn}{sign}

\DeclareMathOperator{\dete}{determinant}
\def\qed{\hfill $\Box$}

\date{\today}

\begin{document}
\author{Sebin Gracy, Mengbin Ye, and Brian D.O. Anderson
\thanks{Sebin Gracy  is with the Department of Electrical  Engineering and Computer Science, South Dakota School of Mines, Rapid City, SD,USA (\texttt{sebin.gracy@sdsmt.edu}). Mengbin Ye is with the School of Mathematical Sciences, Adelaide University, Australia \texttt{ben.ye@adelaide.edu}. Brian D.O.~Anderson is with the School of Engineering, Australian National University, Canberra, Australia, \texttt{(brian.anderson@anu.edu.au)}. 
}.}
\title{\LARGE \bf On a Class of Decentralized Feedback Controllers for the Networked Bivirus SIS Model}
\maketitle
\begin{abstract}
 Recent work has established many properties of systems modeling the spread of two competing viruses in a population, and for networks with multiple connected populations with different spreading properties (e.g. men and women). This work considers the introduction of a class of nonlinear decentralized feedback controls aimed at reducing the fractions of populations infected at an endemic equilibrium. One surprising conclusion is that in some circumstances, new types of equilibria can arise. They can be viewed as an outcome of a transcritical bifurcation which is never observed for uncontrolled systems.  In addition, we show that the controlled system is strongly monotone, and  an unstable boundary equilibrium of the uncontrolled system cannot be stabilized using the class of decentralized state feedback controllers considered in this paper.  
\end{abstract}

\section{Introduction}

A number of papers have considered the dynamics of systems modeling a population in which two competing viruses ---where an individual can be infected by at most one virus at a particular point in time ---are circulating \cite{castillo1989epidemiological,prakash2012winner,bremermann1989competitive,gjini2016direct}. 
Extensions have considered multiple populations which have different infection or healing parameters, connected over a network. Multiple populations might arise from considerations such as age, gender, ethnicity, etc as well as geography \cite{sahneh2014competitive,liu2019analysis,wang2012dynamics,ye2022bivirus_survey,janson2024competitive,anderson2023equilibria,gracy2022siads,gracy2026,santos2015bi,pare2021multi,yang2017bi,li2003coexistence}. 
{As illustration, we recall the well-known susceptible-infected-susceptible (SIS) model for the single population setting (multipopulation equations appear later).
In these equations, for $k=1,2$ the variables $x^k$ denote the \textit{fraction} of the population infected with virus~$k$, and the constants 
$\delta^k,b^k$
denote positive \textit{healing parameters} and positive \textit{infection parameters} respectively.

\begin{align}\label{eq:uncontrolled_single}
\dot x^1(t)&=[-\delta^1+(1-x^1(t)-x^2(t))b^1]x^1(t)\\\nonumber
\dot x^2(t)&=[-\delta^2+(1-x^1(t)-x^2(t))b^2]x^2(t)   
\end{align}
Given an initial outbreak, $x^k(0)\neq 0, k=1,2$, the situation of real interest arises when infections due to either virus persist (i.e., are endemic). This is the case just when for both viruses the infection/recovery ratios both exceed 1:
\begin{equation}\label{eq:ratio_cond}
b^1/\delta^1>1;\quad b^2/\delta^2>1.   
\end{equation}
In the contrary case, one or both viruses simply die out. For generic parameters and with \eqref{eq:ratio_cond}, there are three equilibria, viz. the 
disease-free equilibrium (DFE)
of $\bar x^1 = \bar x^2 = 0$, and two endemic equilibria of $\bar x^1=1-\delta^1/b^1, \bar x^2=0$ and $\bar x^1=0, \bar x^2=1-\delta^2/b^2$ \cite{zino2024modeling}. Only one of the latter two is (almost) globally stable. In case $b^1/\delta^1>b^2/\delta^2>1$, from any initial condition with $x^k(0)\neq 0, k=1,2$ one of two endemic (steady state) equilibria is reached, viz.  $\bar x^1=1-\delta^1/b^1, \bar x^2=0$. Conversely, if $b^2/\delta^2>b^1/\delta^1>1$, the other endemic equilibrium is reached, being $\bar x^1=0, \bar x^2=1-\delta^2/b^2$. The nonattractive endemic equilibrium is a saddle and will be the limit of trajectories which start with a zero initial condition for one virus.  For the nongeneric case $b^1/\delta^1=b^2/\delta^2>1$, 
a line of equilibria joining the two points $(1-\delta^1/b^1,0)$ and $(0,1-\delta^1/b^1)$ exists. To summarize then, when one virus is stronger i.e., $\delta^i/b^i > \delta^j/b^j$ for $i,j \in \{1,2\}$ and $i\neq j$, the stronger virus~$i$ determines the endemic equilibrium, driving out the other virus; the outcome is described as a `winner take all' phenomenon. 

Reduction of the infected fraction associated with an endemic equilibrium (especially the almost globally attractive one) is an obviously reasonable public policy objective, and nonpharmaceutical or pharmaceutical controls may be introduced. We shall assume they take the form of raising the healing parameter, with \textit{the extent of the raising increasing monotonically with  the infected fraction}. This is not unreasonable. 
For instance, during  COVID-19, a surge in cases led to an increased risk perception in the unvaccinated population \cite{lee2024dynamics}, which in turn resulted in a greater willingness towards getting vaccinated \cite{akel2022study}. Studies have shown that adoption of vaccines such as mRNA resulted in $6.4$ fewer total days of
symptoms, thus leading to a faster recovery.
More precisely, we study the modified system
\begin{small}  
\begin{align}\label{eq:controlledsingle}
\dot x^1(t)&=[-\delta^1+(1-x^1(t)-x^2(t))b^1-h^1(x^1(t))]x^1(t)\\\nonumber
\dot x^2(t)&=[-\delta^2+(1-x^1(t)-x^2(t))b^2-h^2(x^2(t))]x^2(t)  
\end{align}
\end{small}
with each $h^k(\cdot)$ nonnegative and  monotone increasing in its argument. Note that the uncontrolled and controlled equation differ by a term which is overall nonlinear in the state due to this monotonicity. 

The controller introduced in~\eqref{eq:controlledsingle} is inspired by the one studied in \cite{ye2021_PH_TAC}, which considered   the single-virus case. It was found that a particular class of decentralized feedback controllers could not eliminate the endemic equilibrium entirely, but would reduce the equilibrium infection level for each population. It is thus natural to seek to extend this to two viruses, and determine to what extent the same improvement in endemic control can be achieved. However, the behavior of the uncontrolled SIS model is more complex with two viruses as compared to one, due to the coupled spreading and competitive dynamics, and the extension turns out to be far from straightforward.


The key issue of concern in this paper is to understand what adjustment arises to the equilibria (and associated convergence properties). 
A surprising outcome is that, with two competing viruses, the `winner take all' equilibrium picture may fail. Indeed, not only is a coexistence equilibrium now generically possible $(\bar x^1\neq 0, \bar x^2\neq 0)$, it can be (almost) globally attractive. In context, this suggests that while a certain class of decentralized feedback control always improves the endemic situation for the single virus problem as noted above, it can negatively impact a bivirus system by allowing both viruses to persist as opposed to one always becoming extinct.

When 
$h^i(.)$ (for $i=1,2$) are taken to be linear for the sake of simplicity, 
the transition from a `winner take all' equilibrium pattern to the coexistence possibility is associated with a transcritical bifurcation associated with change of the scaling parameter. Very simply put, a transcritical bifurcation is said to occur when, as a result of a smooth parameter change,  an equilibrium whose position varies smoothly with the parameter switches its stability properties (e.g., a stable (resp. unstable) equilibrium point becomes unstable (resp. stable)). See \cite{strogatz2024nonlinear,perko2013differential} for an in-depth understanding. 
Our primary intention in this paper is to conclusively establish that the single population model exhibits 
the phenomenon of transcritical bifurcation (Section~\ref{sec:n=1:formal:analysis}). 
Our other major contribution is to show that, for the multiple population case, an unstable boundary equilibrium for the uncontrolled system cannot be stabilized using the class of decentralized state feedback controllers considered in this paper (Theorem~\ref{thm:merged}). 

\textit{Outline of paper:} 
The multipopulation model being investigated in this paper is formally introduced in Section~\ref{sec:model}. 
In Section~\ref{sec:n=1case}, we show  that for the single population case, as the gain of the controller is increased, the system exhibits a transcritical bifurcation,
whereas Section~\ref{sec:equi:analysis} deals with the multipopulation case and focuses on identifying conditions for stability of some of the different kinds of equilibria. Finally, we summarize the paper and highlight some directions of possible interest to the wider community in Section~\ref{sec:conclusion}

{\em Notation}: 
For any positive integer $n$, the set $\{1,2,\ldots,n\}$ is denoted by $[n]$, while 
 $\mathbf{0}$, $\mathbf{1}$  denote vectors with entries all equal $0$ and $1$, respectively.
We write $A\succ 0$  ($A\succeq 0$) when the matrix $A$ is symmetric positive definite (resp. positive semidefinite). Also, $A\prec 0$ (resp. $A\preceq 0$) indicates that matrix $A$ is negative definite (resp. negative semidefinite).
We use $\mathbb{R}$ to denote the set of real numbers
and the set of nonnegative real numbers is denoted by $\mathbb{R}_+$.  For a vector $x$ we denote the diagonal square matrix with $x$ along the diagonal by $\diag(x)$. For any two real vectors $a, b \in \mathbb{R}^n$ we write $a \geq b$ if $a_i \geq b_i$ for all $i \in [n]$, $a>b$ if $a \geq b$ and $a \neq b$, and $a \gg b$ if $a_i > b_i$ for all $i \in [n]$. 
For a square matrix $M$, the  spectral radius, largest eigenvalue abscissa and determinant are denoted by $\rho(M)$,  $s(M)$ and $\det(M)$, respectively.

A real square matrix $A$ is called Metzler if all its off-diagonal entries are nonnegative.
If $A(=[a_{ij}]_{n\times n})$ is a nonnegative matrix, then $\rho(A)$  decreases monotonically with a decrease in $a_{ij}$ for  any
 $i,j \in [n]$. The matrix $A$ is reducible if, and only if, there is a permutation matrix $P$ such that $P^\top AP$ is block upper triangular; otherwise $A$ is said to be irreducible. If a nonnegative $A$ is irreducible and $y=Ax$, then  $x > \textbf{0}$, implies $y > \textbf{0}$, and $y$ cannot have a zero  in every position where $x$ has a zero.

\section{Model}\label{sec:model}
We start by considering  the following (uncontrolled) dynamical system. It corresponds to a multipopulation generalization of \eqref{eq:uncontrolled_single}, with $x^k_j$ denoting the fraction of population $j\in[n]$ infected with virus $k$:
\begin{equation}\label{eq:bivirus}
\begin{array}{rcl}
   \dot{x}^1(t) &=  \Big{(} \big{(} I - (X^1+X^2) \big{)} B^1 - D^1 \Big{)} x^1(t), \\
      \dot{x}^2(t) &=  \Big{(} \big{(} I - (X^1+X^2) \big{)} B^2 - D^2 \Big{)} x^2(t), 
\end{array} 
\end{equation} 
where $x^1, x^2 \in \mathbb{R}^n$; $D^k, B^k$ for $k=1,2$ are of appropriate dimensions, and $X^k=\diag{(x^k)}$ for  $k=1,2$. For $k=1,2$, we define $D^k:=\diag{(\delta_1^k, \delta_2^k, \hdots, \delta_n^k)}$, with the $j$-th diagonal entry denoting the healing parameter for virus $k$ and population $j$. The matrices $B^k=[b^k_{ij}]_{n \times n}$, for $k=1,2$, capture the spreading of virus~$k$ between populations $i$ and $j$. 
\\ Following standard practice, we introduce:

\begin{assumption} \label{assum:base}The matrices $D^k$ are 
positive.
The matrices $B^k$ are nonnegative. 
\end{assumption}

Spreading can be represented by a two-layer graph $\mathcal G = \{\mathcal V, E_1, E_2\}$, where $\mathcal V= \{1,2, \hdots, n\}$ are the set of nodes (populations), and the edge sets $E_1$ and $E_2$ determine the contact spreading network for virus $1$ and virus $2$, respectively \cite{sahneh2014competitive}. 

In order for~\eqref{eq:bivirus} to better reflect realistic scenarios, and as is common in the literature, we also adopt:

\begin{assumption} \label{assum:irreducible}
The matrix $B^k$, for $k=1,2$ is irreducible, or equivalently  \cite[Theorem~4.3]{bullo2024lectures}
the two layers of the multi-layer network $\mathcal G$ are separately strongly connected. 
\end{assumption}


Inspired by the work in \cite{ye2021_PH_TAC}, we now consider the general class of decentralised,
local state feedback controllers of the form
\begin{equation}\label{eq:dist.contrller}
    u_i^k(t) = h_i^k(x_i^k(t)), 
\end{equation}
where $h_i: [0, 1] \rightarrow \mathbb{R}_{\geq 0}$ is bounded, smooth and monotonically nondecreasing, satisfying $h_i^k(0) = 0$ for all $ i \in [n]$ and $k \in [2]$. 
With this control introduced to increase the healing rate as infections increase, the new dynamics become 
\footnotesize
\begin{equation}\label{eq:bivirus:controlnew}
\begin{array}{rcl}
   \dot{x}^1(t) &=  \Big{(} \big{(} I - (X^1(t)+X^2(t)) \big{)} B^1 - D^1 -H^1(x^1(t)) \Big{)} x^1(t) \\
      \dot{x}^2(t) &=  \Big{(} \big{(} I - (X^1(t)+X^2(t)) \big{)} B^2 - D^2 -H^2(x^2(t)) \Big{)} x^2(t) 
\end{array} 
\end{equation} 
\normalsize
Define the regions
\begin{align}\label{eq:domain}
  \mathcal{D}:&= \{x^k(t) \in [0,1]^n,  \forall k \in [2]\mid 
  x^1+x^2 \leq \textbf{1}\} \nonumber \\  \mathcal{D}^k:&= \{x^k(t) \in [0,1]^n\}. 
\end{align}
Fractions of populations in the real world clearly lie in the interval $[0,1]$. Unsurprisingly then a standard result is that $\mathcal D$ is positively invariant with respect to \eqref{eq:bivirus}~\cite[Lemma~8]{liu2019analysis}.
The same holds true for the controlled system, as set out in the following lemma. The lemma conclusion actually is a little tighter, indicating circumstances where strict avoidance of the boundary occurs. The proof of the positive invariance property 
for the controlled system is almost identical with that for the uncontrolled system; the aforementioned  tighter conclusion can be obtained by a slight modification of the proof of \cite[Lemma~3.3]{ye2021convergence}, and hence, we omit the proof.
\begin{lemma}\label{lem:pos:inv:D:control}
Consider system~\eqref{eq:bivirus:controlnew} under Assumption~\ref{assum:base}. Suppose that for all $i \in [n]$ and $k \in [2]$, $h_i^k
: [0, 1] \rightarrow \mathbb{R}_{\geq 0}$ is
bounded, smooth and monotonically nondecreasing, satisfying $h_i^k(0) = 0$. 
Suppose that the initial state lies in $\mathcal D$, i.e.  $x_i^1(0), x_i^2(0), x_i^1(0)+x_i^2(0) \in [0,1]$; then $x_i^1(t), x_i^2(t), x_i^1(t)+x_i^2(t) \in [0,1]$ for all $t \in \mathbb{R}_{\geq 0}$ and $i \in [n]$. Moreover, 
{\color{black} if the initial state $x^1(0),x^2(0)$ lies in the interior of $\mathcal D$, then for all finite $t>0$, $\textbf{0} \ll x^k(t) \ll \mathbf{1}$ for $k=1,2$ and $x^1(t)+x^2(t) \ll \mathbf{1}$. 
}
\end{lemma}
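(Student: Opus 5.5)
We argue coordinatewise, using the fact that each component equation of \eqref{eq:bivirus:controlnew} has the form $\dot x_i^k = -\big(\delta_i^k + h_i^k(x_i^k)\big)x_i^k + (1-x_i^1-x_i^2)\sum_j b_{ij}^k x_j^k$. The right-hand side is smooth, hence locally Lipschitz, so solutions exist and are unique. It therefore suffices to show that the vector field points into $\mathcal D$ on each face of $\partial\mathcal D$. This is Nagumo's tangency condition, and since $\mathcal D$ is a convex polytope, checking it face by face is legitimate.

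\textbf{Faces of $\mathcal D$.} The relevant faces are $x_i^k=0$ and $x_i^1+x_i^2=1$; the faces $x_i^k=1$ are implied by these two, since $x_i^k\le x_i^1+x_i^2\le 1$.

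\emph{Face $x_i^k=0$.} Here $\dot x_i^k=(1-x_i^1-x_i^2)\sum_j b_{ij}^k x_j^k\ge 0$ on $\mathcal D$. Note that $h_i^k(0)=0$ is not even needed here, since $h$ multiplies $x_i^k=0$.

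\emph{Face $s_i:=x_i^1+x_i^2=1$.} Here $\dot s_i=-\sum_k(\delta_i^k+h_i^k(x_i^k))x_i^k\le 0$, because $h_i^k\ge 0$. This gives positive invariance. The control term only adds dissipation, which is why this part is essentially identical to the uncontrolled case.

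\textbf{Strict statement.} Suppose the initial state is interior, so $x^k(0)\gg\mathbf 0$ and $s(0)\ll\mathbf 1$.

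\emph{Lower bound.} Since $h_i^k$ is bounded by some $\bar h$ and all quantities are bounded on $\mathcal D$, we have $\dot x_i^k\ge -(\delta_i^k+\bar h)x_i^k$. Hence $x_i^k(t)\ge x_i^k(0)e^{-(\delta_i^k+\bar h)t}>0$ for all finite $t$.

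\emph{Upper bound.} For $1-s_i$ we compute
\[
\tfrac{d}{dt}(1-s_i)=\sum_k(\delta_i^k+h_i^k)x_i^k-(1-s_i)\sum_k\sum_j b_{ij}^kx_j^k\ge -c(1-s_i),
\]
with $c$ a uniform bound on $\sum_{k,j}b_{ij}^kx_j^k$ over $\mathcal D$. Grönwall then gives $1-s_i(t)\ge(1-s_i(0))e^{-ct}>0$.

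\emph{Remaining inequality.} Finally, $x_i^k<1$ follows from $x_i^k\le s_i<1$.

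\textbf{Main obstacle.} The only delicate point is the justification of invariance from these boundary inequalities. I would handle it either by citing the standard Nagumo/subtangential argument, as in \cite[Lemma~8]{liu2019analysis}, or more simply via the Grönwall bounds above. For the invariance part with boundary initial data, I would argue by contradiction using the first exit time. Because the Grönwall estimates hold as long as the trajectory is in $\mathcal D$, any exit would violate them. Strict interiority is thus obtained without needing Assumption~\ref{assum:irreducible}, which would only be required if one wanted positivity starting from boundary points with $x^k(0)>\mathbf 0$.
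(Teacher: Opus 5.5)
Your proposal is correct and is essentially the argument the paper defers to: the face-by-face tangency check is the uncontrolled invariance proof of \cite[Lemma~8]{liu2019analysis}, since the control term only adds dissipation on $x_i^1+x_i^2=1$, and your exponential lower bounds on $x_i^k$ and $1-x_i^1-x_i^2$ (using boundedness of $h_i^k$) are exactly the ``slight modification'' of \cite[Lemma~3.3]{ye2021convergence}, rightly needing no irreducibility for interior initial data. One minor caution: your first-exit-time remark does not by itself handle boundary initial data, because the Gr\"onwall estimates assume the trajectory lies in $\mathcal D$ and are only non-strict there; rely on Nagumo for that case, or apply the first-hitting-time argument to interior data and pass to the boundary by continuous dependence on initial conditions.
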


The next two  lemmas establish important properties of each equilibrium of system~\eqref{eq:bivirus:controlnew}. Apart from coexistence equilibria (those in the interior of $\mathcal D$), equilibria on the boundary of $\mathcal D$ are restricted to being at the 
DFE, or as an endemic equilibrium associated with a single virus, i.e. the other virus is extinct.  Further, whenever one population at steady state is infected with a virus, all populations must be infected with the same virus. Again, this conclusion is the same as one applying the uncontrolled case. The second of the two lemmas, which is a restatement of Theorem 4 of \cite{ye2021_PH_TAC}, states that single virus endemic equilibria always exist and for each virus are unique.

\begin{lemma} \label{lem:equi_non-zero_nonone:1}
Consider system~\eqref{eq:bivirus:controlnew} under     
Assumptions~\ref{assum:base} and~\ref{assum:irreducible}. 
Suppose that for all $i \in [n]$ and $k \in [2]$, $h_i^k
: [0, 1] \rightarrow \mathbb{R}_{\geq 0}$ is bounded, smooth and monotonically nondecreasing, satisfying $h_i^k(0) = 0$. 
 If $\bar{x} = (\bar{x}^1,  \bar{x}^2) \in \mathcal{D}$ is an equilibrium of~\eqref{eq:bivirus:controlnew}, then, for each $k \in [2]$, either $\bar{x}^k = \textbf{0}$, or $\textbf{0} \ll \bar{x}^k \ll \textbf{1}$. Moreover, 
$\textstyle \sum_{k=1}^2 \bar{x}^k \ll \textbf{1}$.
\end{lemma}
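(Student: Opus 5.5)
We work directly with the equilibrium equations. Fix an equilibrium $\bar x=(\bar x^1,\bar x^2)\in\mathcal D$ and write, for $k\in[2]$ and $i\in[n]$,
\begin{equation*}
0=\big(1-\bar x_i^1-\bar x_i^2\big)\sum_{j} b^k_{ij}\bar x^k_j-\big(\delta_i^k+h_i^k(\bar x_i^k)\big)\bar x_i^k .
\end{equation*}
There are three claims to establish, and each one comes from reading off this scalar equation at a suitably chosen index $i$, with irreducibility of $B^k$ (Assumption~\ref{assum:irreducible}) used to propagate information through the network.

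\emph{Step 1: $\bar x^k_i<1$ and in fact $\bar x_i^1+\bar x_i^2<1$ for all $i$.} Suppose $\bar x_i^1+\bar x_i^2=1$ for some $i$. Then the infection term at row $i$ vanishes, and the equation reduces to $(\delta_i^k+h_i^k(\bar x_i^k))\bar x_i^k=0$. Since $\delta_i^k>0$ and $h_i^k\ge 0$, this forces $\bar x_i^k=0$ for both $k$, which contradicts $\bar x_i^1+\bar x_i^2=1$. Hence $\sum_k\bar x^k\ll\mathbf 1$, and in particular $\bar x^k\ll\mathbf 1$.

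\emph{Step 2: the dichotomy $\bar x^k=\mathbf 0$ or $\bar x^k\gg\mathbf 0$.} Suppose $\bar x^k>\mathbf 0$ but $\bar x_i^k=0$ for some $i$. The equation at row $i$ becomes $(1-\bar x_i^1-\bar x_i^2)(B^k\bar x^k)_i=0$. By Step~1 the prefactor is positive, so $(B^k\bar x^k)_i=0$. Let $Z=\{i:\bar x_i^k=0\}$, which is nonempty and proper. Then $b^k_{ij}=0$ for every $i\in Z$ and every $j\notin Z$. This says that after a permutation $B^k$ is block triangular, contradicting irreducibility. Equivalently, we can invoke the property from the Notation section: $y=B^k\bar x^k$ cannot vanish at every zero position of $\bar x^k$.

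The only delicate point is the order of the argument. Step~1 has to come before Step~2, because the positivity of $1-\bar x_i^1-\bar x_i^2$ is what lets us conclude $(B^k\bar x^k)_i=0$. Step~1 itself uses only $\delta_i^k>0$ and $h_i^k\ge0$. Note that the controller terms enter only through $\delta_i^k+h_i^k(\bar x_i^k)>0$, so the uncontrolled argument carries over essentially unchanged.
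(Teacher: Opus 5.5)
Your proof is correct and takes the route the paper indicates: the paper simply defers to the uncontrolled-case argument of \cite[Lemma~6]{janson2024competitive}, on the grounds that the control enters only through the strictly positive effective healing rate $\delta_i^k+h_i^k(\bar x_i^k)$, which is exactly your observation. Your explicit ordering (first ruling out $\bar x_i^1+\bar x_i^2=1$ using $\delta_i^k>0$, then applying irreducibility of $B^k$ on the zero set of $\bar x^k$) supplies the details the paper omits.
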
 
\textit{Proof:} The proof is identical to that of \cite[Lemma~6]{janson2024competitive}, in view of the hypothesis that for  all $i \in [n]$ and $k \in [2]$, $h_i^k
: [0, 1] \rightarrow \mathbb{R}_{\geq 0}$ is bounded, smooth and monotonically nondecreasing, and $h_i^k(0) = 0$. The details are omitted in the interest of space.~\qed

\begin{lemma}\label{lem:yeLotka}
    Adopt the same hypothesis as Lemma \ref{lem:equi_non-zero_nonone:1}. In addition, suppose that 
    $s(-D^k+B^k)>0$ for $k=1,2$.
    There exist unique boundary equilibria $(\bar x^1,{\bf{0}})$ and $({\bf{0}},\bar x^2)$. 
    If $x^j(0) = {\bf 0}$ for some $j$, then $x^i(t) \to \bar x^i$ for $j\neq i$.
\end{lemma}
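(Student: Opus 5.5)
The plan is to reduce to the single-virus controlled SIS network and then use monotone-systems arguments. If $x^j(0)=\mathbf 0$, then $x^j(t)=\mathbf 0$ for all $t$, because the right-hand side of the $x^j$ equation in \eqref{eq:bivirus:controlnew} is a matrix times $x^j$. So $\{x^j=\mathbf 0\}$ is invariant, and on it the $x^i$ dynamics read
\[
\dot x^i=\big((I-X^i)B^i-D^i-H^i(x^i)\big)x^i ,
\]
which is exactly the single-virus controlled model of \cite{ye2021_PH_TAC}. Boundary equilibria $(\bar x^1,\mathbf 0)$ and $(\mathbf 0,\bar x^2)$ of \eqref{eq:bivirus:controlnew} are in bijection with nonzero equilibria of this reduced system. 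Existence, uniqueness, and convergence from nonzero initial conditions therefore follow once we prove them for the reduced system, and that is precisely Theorem~4 of \cite{ye2021_PH_TAC}. The lemma is a restatement of that theorem, so the main route is to verify its hypotheses: $h_i^i$ bounded, smooth, nondecreasing with $h_i^i(0)=0$; $B^i$ irreducible (Assumption~\ref{assum:irreducible}); and $s(-D^i+B^i)>0$.

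For a self-contained argument, I would proceed as follows. Existence: define $T(y)=\big(D^i+H^i(y)+\diag(B^iy)\big)^{-1}B^iy$, so that nonzero equilibria are the nonzero fixed points of $T$ on $[0,1]^n$. Since $s(B^i-D^i)>0$, take the Perron vector $v\gg 0$ and a small $\epsilon$. Then $\epsilon v$ is a strict subsolution, because $H^i(\epsilon v)=o(1)$ by smoothness and $h^i(0)=0$. Also $\mathbf 1$ is a supersolution. The reduced system is cooperative (the Jacobian is Metzler: off-diagonal entries are $(1-x_k)b_{kl}\ge0$) and irreducible, so the trajectory from $\epsilon v$ increases monotonically to an equilibrium $\bar x^i$ with $\mathbf 0\ll\bar x^i\ll\mathbf 1$, consistent with Lemma~\ref{lem:equi_non-zero_nonone:1}.

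Uniqueness is the step I expect to be the main obstacle. I would use the standard scaling argument for the SIS map. Suppose $\bar y,\bar z$ are two nonzero equilibria, both $\gg0$, and let $\alpha=\max_k \bar y_k/\bar z_k$, attained at index $m$. Assume $\alpha>1$. The equilibrium equation at $m$ is
\[
\delta_m+h_m(y_m)=(1-y_m)(B^iy)_m/y_m .
\]
Because $B^i\bar y\le\alpha B^i\bar z$, the right side for $\bar y$ is at most $(1-\bar y_m)(B^i\bar z)_m/\bar z_m$, which is strictly less than $(1-\bar z_m)(B^i\bar z)_m/\bar z_m$ since $\bar y_m>\bar z_m$. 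The left side for $\bar y$ is at least its value for $\bar z$ by monotonicity of $h_m$. This is a contradiction, so $\alpha\le1$, and by symmetry $\bar y=\bar z$. The monotone nondecreasing $h$ only helps here; this is the key place the control hypothesis is used.

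Convergence: every nonzero $x^i(0)\in[0,1]^n$ becomes $\gg0$ instantly by irreducibility (Lemma~\ref{lem:pos:inv:D:control}). It can then be sandwiched between $\epsilon v$ and $\mathbf 1$. The trajectories from these two bounds converge monotonically (increasing and decreasing, respectively) to equilibria that are $\gg0$, and by uniqueness both limits equal $\bar x^i$. Monotonicity of the flow then yields $x^i(t)\to\bar x^i$.
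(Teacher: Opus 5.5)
Your main route is the same as the paper's: the set $\{x^j=\mathbf 0\}$ is invariant, on it the dynamics reduce to the controlled single-virus network model, and the lemma is then a restatement of Theorem~4 of \cite{ye2021_PH_TAC} (with the implicit requirement $x^i(0)\neq\mathbf 0$, which you rightly make explicit). Your self-contained sub/supersolution and ratio arguments are also sound, with one small fix: strict positivity of $x^i(t)$ for $t>0$ when $x^j(0)=\mathbf 0$ should be derived from irreducibility of $B^i$ in the reduced system, because the strict-positivity clause of Lemma~\ref{lem:pos:inv:D:control} is stated only for initial states in the interior of $\mathcal D$.
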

Note that the lemma is \textit{not} claiming that $(\bar x^1, {\bf 0})$ and $({\bf 0},\bar x^2)$ are necessarily stable equilibria for the bivirus system in \eqref{eq:bivirus:controlnew}. It does imply they are either saddles or stable. 

In the uncontrolled case, the Jacobian of the associated system is key to establishing that the underlying system is monotonic: the monotonicity then implies trajectory ordering properties, and a key conclusion that from almost all initial conditions, convergence will occur to an attractive equilibrium.  That is, suppose that $(x_A^1(0), x_A^2(0))$ and $(x_B^1(0), x_B^2(0))$ are two initial conditions in $\textrm{int}(D)$ satisfying i) $x_A^1(0)>x_B^1(0)$ and ii) $x_A^2(0)<x_B^2(0)$. Since the  system is monotone, it follows that, for all $t \in \mathbb{R}_{\geq 0}$, i) $x_A^1(t)\gg x_B^1(t)$ and ii) $x_A^2(t)\ll x_B^2(t)$. 
To draw similar conclusions for the controlled system, observe first that the Jacobian of system~\eqref{eq:bivirus:controlnew} at an arbitrary point $(x^1, x^2) \in \mathcal D$ is as follows:
\begin{equation}\label{eq:jacobian}
    J(x^1, x^2)=
    \begin{bmatrix}
        J_{11}&J_{12} \\
        J_{21}&J_{22}
    \end{bmatrix}
\end{equation}
where
\begin{align}\label{eq:jacobianblocks}
   J_{11}&=-D^1-H^1(x^1)-(H^1(x^1))^\prime X^1 \nonumber \\ &~~~~+(I-X^1-X^2)B^1 -\diag(B^1x^1) \\\notag
   J_{12}&=-\diag(B^1x^1)\\\notag
   J_{21}&=-\diag(B^2x^2)\\\notag
   J_{22}&=-D^2-H^2(x^2)-(H^2(x^2))^\prime X^2 \nonumber \\ &~~~~+(I-X^1-X^2)B^2-\diag(B^2x^2) \notag
\end{align}

It is immediate that, as for the case treated in 
\cite{ye2021convergence}
when there is no control, (corresponding to $H^k(x^k)$ and its derivative being replaced by zero in the blocks of $J$), the sign pattern is special. In fact, with the definition $P=\diag(I_n,-I_n)$, the matrix $PJP$ is immediately seen to be a Metzler matrix. This preservation of the sign pattern when introducing the control means the following property, known from the control-free case, is valid.  The proof for system \eqref{eq:bivirus:controlnew} differs only very slightly from that in the control free case, and will be omitted.

\begin{theorem}\label{thm:monotonenew}
    Consider system~\eqref{eq:bivirus:controlnew} under Assumption~\ref{assum:base} and~\ref{assum:irreducible}.  Suppose that for all $i \in [n]$ and $k \in [2]$, $h_i^k
: [0, 1] \rightarrow \mathbb{R}_{\geq 0}$ is
bounded, smooth and monotonically nondecreasing, satisfying $h_i^k(0) = 0$. Then system~\eqref{eq:bivirus:controlnew} is strongly monotone.
\end{theorem}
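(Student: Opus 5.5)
The plan is to use the standard Kamke--M\"uller / Hirsch--Smith criterion for strong monotonicity with respect to the orthant order induced by $P=\diag(I_n,-I_n)$, i.e. the cone $K=\mathbb{R}^n_+\times(-\mathbb{R}^n_+)$. Two properties of the Jacobian \eqref{eq:jacobian} have to be checked on the relevant invariant set: (i) $PJ(x^1,x^2)P$ is Metzler on $\mathcal D$, which gives monotonicity, and (ii) $PJ(x^1,x^2)P$ is irreducible, at least on $\textrm{int}(\mathcal D)$, which upgrades this to strong monotonicity. The state space $\mathcal D$ is convex and, by Lemma~\ref{lem:pos:inv:D:control}, positively invariant. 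Trajectories from the interior stay in the interior for all finite $t>0$. So it suffices to apply \cite[Theorem~4.1.1 / Theorem~4.1.2]{ye2021convergence}-style results, or Smith's monograph, on $\textrm{int}(\mathcal D)$, and then extend to boundary initial conditions via the usual continuity argument used in the uncontrolled case.

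First, I will check the sign pattern. $PJP$ has diagonal blocks $J_{11},J_{22}$ and off-diagonal blocks $-J_{12}=\diag(B^1x^1)\ge 0$ and $-J_{21}=\diag(B^2x^2)\ge0$. The off-diagonal entries of $J_{11}$ are exactly those of $(I-X^1-X^2)B^1$, because the control terms $-H^1(x^1)-(H^1)'(x^1)X^1$ are diagonal. Since $x^1+x^2\le\mathbf 1$ on $\mathcal D$, these entries are nonnegative, and the same holds for $J_{22}$. Hence $PJP$ is Metzler throughout $\mathcal D$. The key observation is that the decentralized controller only perturbs diagonal entries, so the argument is insensitive to $h^k_i$ apart from smoothness, which makes $J$ continuous.

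Second, I will check irreducibility at interior points. There $I-X^1-X^2$ is a positive diagonal matrix (Lemma~\ref{lem:pos:inv:D:control}), so the off-diagonal pattern of $J_{kk}$ coincides with that of $B^k$. That pattern is irreducible by Assumption~\ref{assum:irreducible}. Moreover $B^kx^k\gg\mathbf 0$ for $x^k\gg\mathbf 0$ by irreducibility, so each coupling block $\diag(B^kx^k)$ has all diagonal entries positive. The graph of $PJP$ therefore consists of two strongly connected layers joined by edges in both directions, and is strongly connected. Hence $PJP$ is irreducible in $\textrm{int}(\mathcal D)$.

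Finally, I will invoke the Hirsch theorem: a cooperative, irreducible system on an open convex set is strongly monotone there. Transformed back through $P$, this gives exactly the ordering $x_A^1(t)\gg x_B^1(t)$, $x_A^2(t)\ll x_B^2(t)$ described above. The main obstacle is the boundary of $\mathcal D$. There irreducibility fails, for example when $x^k=\mathbf 0$ kills a coupling block, or when $x^1_i+x^2_i=1$ zeroes a row. I will handle it as in \cite{ye2021convergence}. Initial conditions in $\mathcal D$ that are ordered and distinct enter the interior instantly, or else remain on an invariant face $x^k=\mathbf 0$. On that face the reduced single-virus system is itself strongly monotone by irreducibility of $B^k$, and the conclusion follows there too. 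Since the paper asserts this step differs only slightly from the uncontrolled case, I expect only this boundary bookkeeping to need care.
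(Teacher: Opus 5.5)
Your argument is correct and follows the paper's route. The paper observes that the decentralized control only alters diagonal entries, so $PJP$ keeps the Metzler structure of the control-free case (and the irreducible structure in $\mathrm{int}(\mathcal D)$). It then defers to the Kamke--Hirsch argument of \cite{ye2021convergence}, which is exactly what you outline on the open convex invariant set $\mathrm{int}(\mathcal D)$.

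One claim should be dropped: that strong monotonicity extends to the invariant faces. Take two distinct ordered initial conditions that both lie on $\{x^2=\mathbf 0\}$. They keep $x^2_A(t)=x^2_B(t)=\mathbf 0$ for all $t$, so the strict ordering $x^2_A(t)\ll x^2_B(t)$ cannot hold, whatever the reduced single-virus system does. The correct statement, and the one the paper actually uses, is monotonicity on $\mathcal D$ and strong monotonicity on $\mathrm{int}(\mathcal D)$.
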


The key consequence of Theorem~\ref{thm:monotonenew} is as follows: Assuming system~\eqref{eq:bivirus:controlnew} has a finite number of equilibria (and it can be argued that this assumption is fulfilled with generic parameter values), the typical behavior of system~\eqref{eq:bivirus:controlnew} is convergence to some stable equilibrium. Typical here means from almost all initial conditions. An initial condition coinciding with an unstable equilibrium is a nontypical condition, as is an initial condition in which one of $x^1(0)$ and $x^2(0)$ is zero. Limit cycles, if they exist, must be nonattractive. All other behaviors such as chaos can be conclusively ruled out \cite{sontag2007monotone}.

The presence of control does however produce changes in the way equilibria occur. 
The paper will now focus more rigorously on such changes, beginning with the more straightforward single population case.

\section{The single population case} 
\label{sec:n=1case}
In this section, we investigate the behavior of a single-population bivirus SIS model in the presence of the controller of interest. Our interest in considering the single-population case stems from the observation that its simplicity makes analysis tractable, and even so, the results obtained 
serve as a baseline and additional motivation for extensions to the networked case. As we will see later in this section, a crucial difference between the uncontrolled  system 
 and the controlled system 
  is that the former does not admit any coexistence equilibria, whereas the latter does. A downstream effect of said difference is with regard to a 
  ``winner takes all"
  battle. 
  In the language of dynamical systems, the winner of the aforementioned battle is the virus whose single-virus endemic equilibrium is stable. For the uncontrolled system with generic parameter choices, since there cannot be any coexistence equilibrium (let alone a stable coexistence equilibrium), when two viruses persist in the population, one of them must necessarily win \cite{zino2024modeling}. However, for the controlled system, the possibility of the presence of a coexistence equilibrium implies a situation where there is \emph{no} winner in a  winner takes all 
  battle.

\subsection{An illustrative example} Before proceeding with a formal analysis of the single-population setting, we present an illustrative example by way of simulation. 
Suppose that $\delta^1 = \delta^2 = 1$, $b^1 = 3$ and $b^2 = 2$. In the uncontrolled dynamics of \eqref{eq:uncontrolled_single}, 
we observe convergence from all initial conditions in the interior of $\mathcal D$ to the virus~1 boundary equilibrium (i.e., $(\bar{x}^1, 0)$), as per Fig.~\ref{fig:uncontrolled}. We then introduce control $h^1(x^1) = \alpha x^1$, while setting $h^2(x^2) \equiv 0$. 
In Fig.~\ref{fig:alpha_05} -- \ref{fig:alpha_3}, we progressively increase $\alpha$, and observe that the system instead converges to a coexistence equilibrium. It appears that there is some $\bar \alpha$ value (here 1), below which 
the controlled system continues to converge to the virus~1 boundary equilibrium, and above which it converges to a coexistence equilibrium. As noted in the Introduction, in the absence of control, the system can never admit any coexistence equilibrium (except in the nongeneric situation 
i.e., $b^1/\delta^1=b^2/\delta^2$), showing that feedback control has fundamentally changed the system and its equilibria. 

\begin{figure}
    \centering
        \subfloat[Uncontrolled dynamics]{\includegraphics[width=0.5\columnwidth]{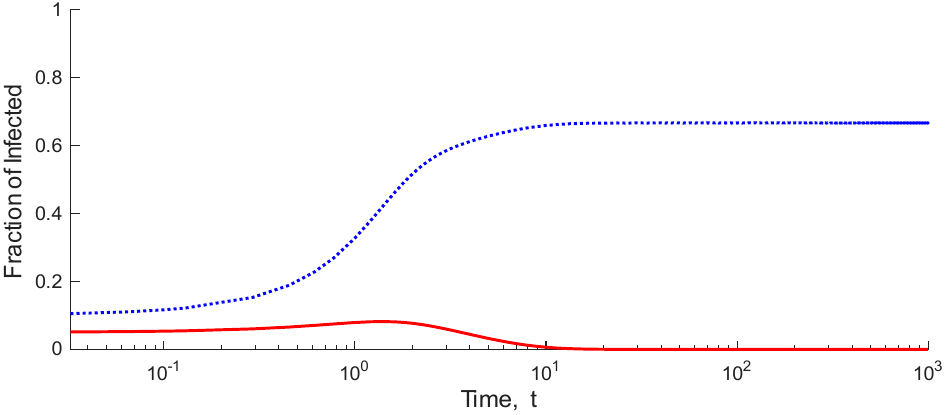}\label{fig:uncontrolled}}
        \subfloat[ $\alpha = 0.5$.]{\includegraphics[width=0.5\columnwidth]{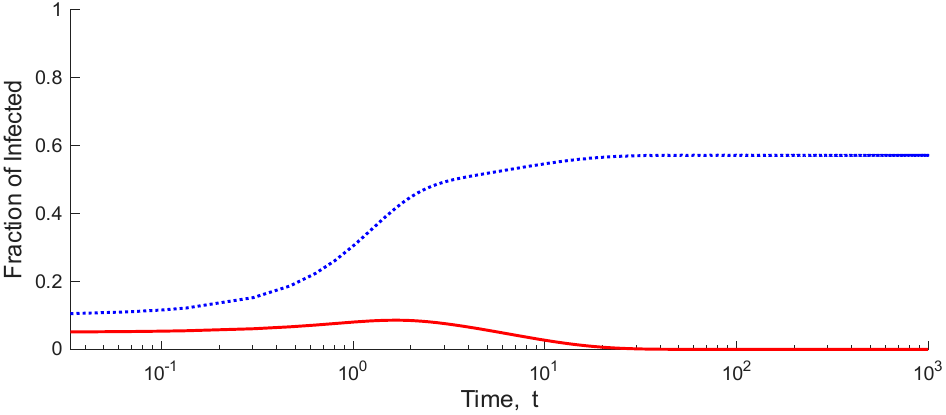}\label{fig:alpha_05}}%
        \vfill
        \subfloat[ $\alpha = 1.05$.]{\includegraphics[width=0.5\columnwidth]{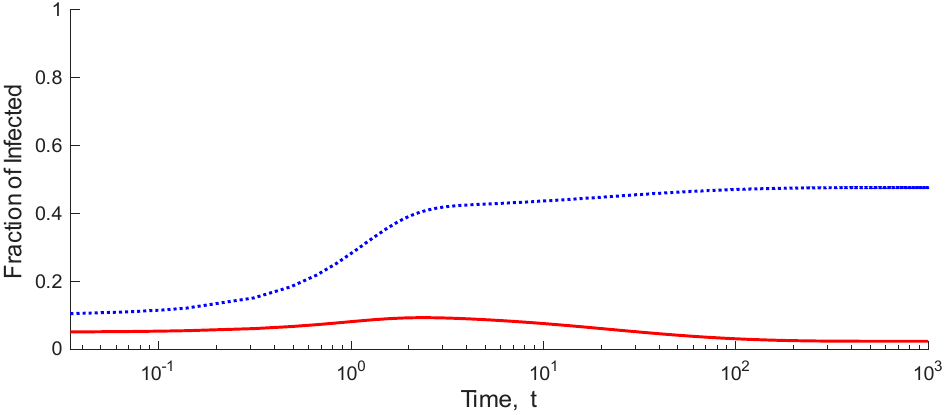}\label{fig:alpha_105}}
        \subfloat[ $\alpha = 3$.]{\includegraphics[width=0.5\columnwidth]{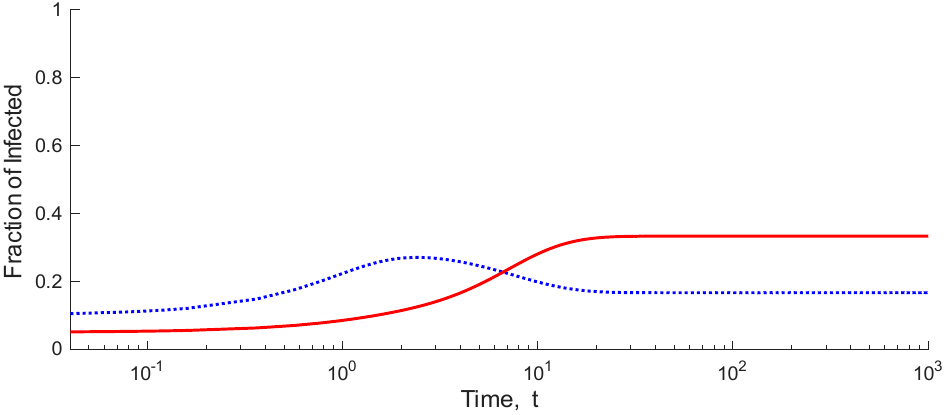}\label{fig:alpha_3}}%
    \caption{Simulations for the $n = 1$ case\label{fig:n1}. Virus 1 (blue line), Virus 2 (red line). }
\end{figure}

\subsection{Formal Analysis}\label{sec:n=1:formal:analysis}

Our starting point is the pair of equations \eqref{eq:controlledsingle}. Suppose that virus 1 is stronger, i.e. $b^1/\delta^1>b^2/\delta^2>1$. With no control, a `winner takes all' attractive equilibrium occurs at $(1-\delta_1/b_1,0)$ on the virus 1 axis. A control is introduced to attempt to lower the fraction of people infected with virus~1. No control is applied to virus 2. More specifically, we consider the controlled set
\begin{align}
\dot x^1(t)&=[-\delta^1+(1-x^1(t)-x^2(t))b^1-\alpha x^1(t))]x^1(t)\nonumber\\
\dot x^2(t)&=[-\delta^2+(1-x^1(t)-x^2(t))b^2]x^2(t)  \label{eq:controlledsinglealpha}
\end{align}
where $\alpha$ is nonnegative (the zero value corresponding to no control). 

It is straightforward to determine all equilibria of these equations, as well as their stability properties. First, the positions are:
\begin{align}
    E_0&=(0,0);\quad E_1=\big(\frac{b^1-\delta^1}{b^1+\alpha},0\big);\quad E_2=\big(0,1-\frac{\delta^2}{b^2}\big);\nonumber\\
    E_3&=\big (\frac{b^1}{\alpha}(\frac{\delta^2}{b^2}-\frac{\delta^1}{b^1}),1-\frac{\delta^2}{b^2}-\frac{b^1}{\alpha}(\frac{\delta^2}{b^2}-\frac{\delta^1}{b^1})\big)
\end{align}

The 
DFE $E_0$ is to be expected. The virus 1-only equilibrium $E_1$ becomes smaller, the bigger is $\alpha$, which is consistent with the anticipated goal of using control and with previous findings~\cite{ye2021_PH_TAC}. The virus 2-only equilibrium $E_2$ is unaltered in its position. The remaining equilibrium $E_3$ is an additional solution of the steady state equation associated with \eqref{eq:controlledsinglealpha} arising because $\alpha\neq 0$ but absent when $\alpha=0$. To understand why there can be no coexistence equilibrium when $\alpha=0$, observe that the steady state equations with the assumption that $\bar x^1\neq0,\bar x^2\neq 0$ yield two linear equations for $x^1+x^2$ which are incompatible, because $\delta^1/b^1\neq\delta^2/b^2$. For small values of $\alpha$, $x_1$ will be very large and $x_2$ has a negative value. Thus in relation to modeling of bivirus spread, $E_3$ is outside the region $\mathcal D$ of interest. However, as $\alpha$ increases, there is a critical value of $\alpha$ where $x_1$ becomes less than 1 and a larger value again where the sign of $x_2$ changes to positive. Above this latter value, $E_3$ becomes a coexistence equilibrium inside $\mathcal D$.  This critical value is

\begin{equation}\label{eq:criticalalpha}
\bar \alpha=\frac{b^1\delta^2-b^2\delta^1}{b^2-\delta^2}.
\end{equation}
While for $\alpha\geq\bar\alpha$, there is a coexistence equilibrium in $\mathcal D$,  at $\alpha= \bar\alpha$ there holds 
\begin{equation}
    E_1=E_3=\big(1-\frac{\delta^2}{b^2},0\big)
\end{equation}

To sum up, as $\alpha$ varies upwards from zero, two of the equilibria $E_0$ and $E_2$ remain fixed, while the other two $E_1$ and $E_3$ move on straight lines. One of these lines is part of the $x_1$ axis, and the other crosses it. At the intersection point, two of the equilibria temporarily coincide. 

 We now  consider the associated stability properties of the four equilibria, first considering the eigenvalues at each equilibrium of the Jacobian associated with \eqref{eq:controlledsinglealpha}. 

 The Jacobian associated with \eqref{eq:controlledsinglealpha} at a general point $(x^1,x^2))$ is given by
\begin{small}
 \begin{align}\label{eq:generalJacobiansinglealpha}
&J(x^1,x^2)=\nonumber\\
&\begin{bmatrix}
    b^1-\delta^1-2(b^1+\alpha)x^1-b^1x^2&-b^1x^1\\
    -b^2x^2&b^2-\delta^2-b^2x^1-2b^2x^2
\end{bmatrix}
\end{align}
\end{small}

At $E_0$ and $E_2$, the expressions are independent of $\alpha$ and correspond to an unstable node and a saddle node:
\begin{align}
    J(E_0)&=\begin{bmatrix}
            b^1-\delta^1&0\\0&b^2-\delta^2
            \end{bmatrix}
        \\\nonumber
    J(E_2)&=\begin{bmatrix}
            \frac{b^1\delta^2}{b^2}-\delta^1&0\\
            -b^2+\delta^2&-b^2+\delta^2
            \end{bmatrix}
\end{align}
    
Next, we have
\begin{equation}
    J(E_1)=\begin{bmatrix}
            -b^1+\delta^1&-b^1\frac{b^1-\delta^1}{b^1+\alpha}\\
            0&b^2-\delta^2-b^2\frac{b^1-\delta^1}{b^1+\alpha}
    \end{bmatrix}
\end{equation}

One eigenvalue, viz. $-b^1+\delta^1$ is always negative. The other is negative for $\alpha<\bar \alpha$, zero for $\alpha=\bar\alpha$ and positive for $\alpha>\bar\alpha$. Thus increasing $\alpha$ changes its character from stable to unstable. 

Last, at the coexistence equilibrium $E_3=(\bar x^1_3,\bar x^2_3)$ (but irrespective of the sign of $\bar x^2_3$). the equilibrium equations allow simplification of the expression for the Jacobian and
\begin{equation}
    J(E_3)=\begin{bmatrix}
        -(b^1+\alpha)\bar x^1_3&-b^1\bar x^1_3\\-b^2\bar x^2_3&-b^2\bar x^2_3
    \end{bmatrix}
\end{equation}
Inside $\mathcal D$, the values of $\bar x^1_3$ and $\bar x^2_3$ are constrained and force $J$ to have a positive determinant, and negative trace, implying stability. When $\bar x^2_3$ is negative, instability occurs due to the negative determinant, but this is outside of $\mathcal D$ and thus of limited relevance.

In summary, as $\alpha$ increases from zero, the virus-1 boundary equilibrium point $E_1$ changes from being stable to unstable, and a stable coexistence equilibrium appears in $\mathcal D$. This is exactly the form a transcritical bifurcation takes. In this case, all stable coexistence equilibria have the same value of $\bar x^1+\bar x^2$, i.e. the total fraction of the population that is infected with either virus is independent of the value of $\alpha$; increase of $\alpha$ beyond $\bar\alpha$ solely serves to change the mix of infections between those due to the two different viruses. See Fig.~\ref{fig:bifurcation_diagram} for a visual representation.

All these observations are of course consistent with the example given earlier, in Fig.~\ref{fig:n1}.

We can also observe that once $\alpha$ is such that the two boundary equilibria are both unstable, it is a consequence of monotone systems theory that there is necessarily a stable coexistence equilibrium \cite[Corollary~6.3]{doshi2022convergence}. The conclusion is also consistent with the way Poincar\'e-Hopf theory can be used to pin down aspects of the equilibria and their stability properties, \cite{anderson2023equilibria}.


\begin{figure}
    \centering
    \resizebox{\columnwidth}{!}{\begin{tikzpicture}[>=Latex, line cap=round, line join=round]

    \definecolor{ptgreen}{RGB}{0,153,0}
    \definecolor{ptblue}{RGB}{0,0,238}
    \definecolor{ptred}{RGB}{237,28,36}
    \definecolor{ptmagenta}{RGB}{204,51,204}

    \draw[->, thick] (0,0) -- (6.8,0) node[right] {\footnotesize$x^1$};
    \draw[->, thick] (0,0) -- (0,3.3) node[above] {\footnotesize$x^2$};

    \coordinate (O)      at (0,0);
    \coordinate (Ptop)   at (0,1.6);    
    \coordinate (Ptri)   at (3.2,0);    
    \coordinate (Pright) at (5.0,0);    
    \coordinate (Bend)   at (0,3.0);    
    \coordinate (Aend)   at (6.0,0);    
    \coordinate (Rend)   at (5.0,-0.9); 

    \draw[thick] (Bend) -- (Aend);
    \node[left] at (Bend) {\footnotesize$(0,1)$};

    \draw[ptred, very thick] (O) -- (Ptri);        
    \draw[ptred, very thick] (Ptri) -- (Rend);     

    \draw[ptblue, very thick] (Ptop) -- (Ptri);    
    \draw[ptblue, very thick] (Ptri) -- (Pright);  

    \node[fill=ptgreen, draw=ptgreen, minimum size=6pt, inner sep=0pt] at (Ptop) {};
    \node[fill=ptgreen, draw=ptgreen, minimum size=6pt, inner sep=0pt] at (Pright) {};
    \node[regular polygon, regular polygon sides=3, fill=ptmagenta,
          draw=ptmagenta, minimum size=9pt, inner sep=0pt] at (Ptri) {};

    \node[right] at (0,1.8) {\footnotesize$1-\dfrac{\delta^2}{b^2}$};
    \node[below=2pt] at (5.5,0.15) {\footnotesize$1-\dfrac{\delta^1}{b^1}$};
    \node at (6.3,0.25) {\footnotesize$(1,0)$};

    \draw[->] (1.868,0.8) -- (1.332,1.07);
    \node at (2.15,0.98) {\footnotesize$\alpha>\bar\alpha$};

    \draw[->] (1.5,0.12) -- (1,0.12);
    \node[above] at (1.1,0.12) {\footnotesize$\alpha>\bar\alpha$};

    \draw[->] (4.6,0.12) -- (4.1,0.12);
    \node[above] at (4.35,0.12) {\footnotesize$\alpha<\bar\alpha$};

    \draw[->] (4.368,-0.7) -- (3.832,-0.43);
    \node[below] at (4.1,-0.72) {\footnotesize$\alpha<\bar\alpha$};

    \node[below left] at (2.9,-0.1) {\footnotesize$1-\dfrac{\delta^2}{b^2}$};

  \end{tikzpicture}%
}
    \caption{Transcritical bifurcation. The green squares indicate the single-virus boundary equilibria; the upper left square is stationary for all $\alpha$, while the lower right square captures $\alpha =  0$ and it slides left as $\alpha$ increases. Direction of black arrows indicate \textit{increasing} $\alpha$ for 
    system \eqref{eq:controlledsinglealpha}. Blue and red lines correspond to equilibrium points for differing $\alpha$; blue and red indicate a stable and unstable equilibrium, respectively.  The transcritical bifurcation occurs at the pink triangle, with $\alpha = \bar{\alpha}$, at the $x^1$ value of $1- \frac{\delta^2}{b^2}$. }
    \label{fig:bifurcation_diagram}
\end{figure}

 \section{Equilibria Stability Analysis for the general case - DFE and Boundary Equilibria}\label{sec:equi:analysis}
 In this section, we consider the multi-population case (i.e., system~\eqref{eq:bivirus:controlnew}), and identify conditions for stability of the DFE and show that an unstable boundary equilibrium of system~\eqref{eq:bivirus} cannot be stabilized using the family of decentralized  state feedback controllers given in~\eqref{eq:dist.contrller}.
 
 First, observe that the DFE ($\textbf{0}, \textbf{0}$) is always an equilibrium for system~\eqref{eq:bivirus:controlnew}. For system~\eqref{eq:bivirus:controlnew}, specialized for the single-virus case and assuming $s(-D+B)\leq0$, the DFE (i.e, the point $\textbf{0}$) is a globally asymptotically stable equilibrium, see \cite[Theorem~3]{ye2021_PH_TAC}. It is quite straightforward to show that the condition in \cite[Theorem~3]{ye2021_PH_TAC} can be generalized to secure global asymptotic stability of the DFE. We have the following result.
 \begin{proposition}\label{prop:dfe_bivirus_control}
    Consider system~\eqref{eq:bivirus:controlnew} under Assumptions~\ref{assum:base} and~\ref{assum:irreducible}.  Suppose that for all $i \in [n]$ and $k \in [2]$, $h_i^k
: [0, 1] \rightarrow \mathbb{R}_{\geq 0}$ is
bounded, smooth and monotonically nondecreasing, satisfying $h_i^k(0) = 0$. The following statements hold:
\begin{enumerate}
    \item If $s(-D^k+B^k)\leq 0$, then for all $(x^1(0), x^2(0))\in \mathcal D$, one has $\lim_{t \rightarrow \infty}x^k(t)=\textbf{0}$.
    \item The DFE is the unique equilibrium if, and only if, $s(-D^k+B^k)\leq 0$ for $k=1,2$.
\end{enumerate}
\end{proposition}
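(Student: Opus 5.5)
Both parts rest on the same comparison with a linear system. For statement 1, fix $k$ with $s(-D^k+B^k)\le 0$. By Lemma~\ref{lem:pos:inv:D:control}, along any trajectory starting in $\mathcal D$ we have $x^1(t)+x^2(t)\le\mathbf 1$ and $x^k(t)\ge\mathbf 0$. Since $H^k(x^k)\ge 0$ and $X^1+X^2\ge 0$, we have componentwise
\[
\dot x^k=\big((I-X^1-X^2)B^k-D^k-H^k(x^k)\big)x^k\le (B^k-D^k)x^k-(X^1+X^2)B^kx^k\le (B^k-D^k)x^k .
\]
This is exactly the inequality used in the single-virus proof of \cite[Theorem~3]{ye2021_PH_TAC}, with $X^k$ replaced by $X^1+X^2$. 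The extra virus and the extra control only make the drift more negative.

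The plan is to follow that proof. Since $-D^k+B^k$ is irreducible Metzler, it has a left Perron vector $v\gg\mathbf 0$ with $v^\top(B^k-D^k)=s\,v^\top$, where $s:=s(-D^k+B^k)$. Take $V(x^k)=v^\top x^k$ as Lyapunov function. Then
\[
\dot V\le s\,v^\top x^k - v^\top (X^1+X^2)B^k x^k - v^\top H^k(x^k)x^k\le 0 .
\]
\begin{enumerate}
\item If $s<0$, then $\dot V\le sV$, so $x^k\to\mathbf 0$ exponentially.
\item If $s=0$, the term $-v^\top X^kB^kx^k$ is strictly negative whenever $x^k>\mathbf 0$. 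This holds because $B^k$ is irreducible and nonnegative, so $B^kx^k$ cannot vanish on the whole support of $x^k$. Hence $\dot V<0$ for $x^k\ne\mathbf 0$, uniformly in $x^{j}$.
\end{enumerate}
The main obstacle is this critical case $s=0$. Because $V$ depends only on $x^k$ while the system also involves $x^j$, I would not use LaSalle on the full system. Instead I would argue directly: $V(t)$ is nonincreasing and bounded below, so it has a limit $c$. If $c>0$, then $x^k(t)$ stays in the compact set $\{x^k\in[0,1]^n: v^\top x^k\ge c\}$. On that set $\dot V\le -v^\top X^kB^kx^k\le-\epsilon<0$ by continuity, which contradicts convergence of $V$. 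Hence $c=0$ and $x^k\to\mathbf 0$.

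For statement 2, sufficiency follows from statement 1. If $s(-D^k+B^k)\le0$ for both $k$, every trajectory converges to $(\mathbf 0,\mathbf 0)$, so no other equilibrium can exist in $\mathcal D$. For necessity, suppose $s(-D^k+B^k)>0$ for some $k$. I would then produce a nonzero equilibrium $(\bar x^k,\mathbf 0)$ (or its mirror) from the single-virus result \cite[Theorem~4]{ye2021_PH_TAC}, i.e.\ the content of Lemma~\ref{lem:yeLotka} applied only to virus $k$. Setting $x^j\equiv\mathbf 0$ is invariant, and the reduced system is the single-virus controlled model, which has a unique endemic equilibrium $\mathbf 0\ll\bar x^k\ll\mathbf 1$ when $s(-D^k+B^k)>0$. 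Hence the DFE is not the unique equilibrium. One care point: Lemma~\ref{lem:yeLotka} as stated assumes the condition for both $k$, so I would cite the single-virus theorem directly rather than the lemma.
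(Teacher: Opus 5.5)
\textbf{Gap in the critical case.} Your treatment of the case $s(-D^k+B^k)<0$ and of statement~2 is sound; statement~2 is essentially the paper's argument. The critical case $s(-D^k+B^k)=0$ of statement~1, however, has a genuine gap.

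Your claim that $v^\top X^kB^kx^k>0$ whenever $x^k>\mathbf 0$ is false. Irreducibility does not stop $B^kx^k$ from vanishing on the support $S$ of $x^k$. It only guarantees, when $S\neq[n]$, that $(B^kx^k)_i>0$ for some $i\notin S$.

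Assumption~\ref{assum:base} permits zero diagonal entries, so here is a counterexample:
\begin{itemize}
\item Take $D^k=I$ and $B^k=[\begin{smallmatrix}0&1\\1&0\end{smallmatrix}]$. Then $s(-D^k+B^k)=0$.
\item Take $x^k=(1,0)^\top$. Then $X^kB^kx^k=\mathbf 0$, although $v^\top x^k>0$.
\item With $h^k\equiv0$ (allowed by the hypotheses) and $x^j=\mathbf 0$, one has $\dot V=0$ at this nonzero state.
\end{itemize}
So for small $c>0$, the function $v^\top X^kB^kx^k$ is not bounded away from zero on $\{x^k\in[0,1]^n: v^\top x^k\ge c\}$. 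The step ``$\dot V\le-\epsilon$ by continuity'' therefore fails, and that step carries your whole $s=0$ argument.

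\textbf{What is missing, and how to supply it.} You need an argument that the trajectory cannot linger where $X^kB^kx^k=\mathbf 0$. LaSalle does supply it, and your reservation about using it is unnecessary: \eqref{eq:bivirus:controlnew} is autonomous on the compact forward-invariant set $\mathcal D$, and $V$ may depend on $x^k$ alone. The argument runs as follows.
\begin{itemize}
\item On the $\omega$-limit set of $(x^1,x^2)$, $V$ is constant, so $x_i^k(B^kx^k)_i=0$ for all $i$.
\item At a point of this set with $x^k\neq\mathbf 0$, one has $S\neq[n]$; otherwise $B^kx^k=\mathbf 0$ with $x^k\gg\mathbf 0$, which irreducibility rules out.
\item Irreducibility then gives $i\notin S$ with $(B^kx^k)_i>0$.
\item The $i$-th equation makes $x_i^k$ positive for small $t>0$. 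If $x_i^j=1$ there, note $\dot x_i^j<0$, so $1-x_i^1-x_i^2$ becomes positive at once.
\item This forces $\dot V<0$ along the trajectory, contradicting invariance of the $\omega$-limit set.
\end{itemize}

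\textbf{The paper's route.} The paper sidesteps all of this. It bounds $\dot x^k\le(-D^k+(I-X^k)B^k)x^k$ and uses quasimonotonicity of this single-virus field with Walter's comparison theorem. This gives $x^k(t)\le y(t)$, where $y$ is the uncontrolled single-virus SIS trajectory with $y(0)=x^k(0)$. It then imports the known fact that $y(t)\to\mathbf 0$ whenever $s(-D^k+B^k)\le 0$, including the critical case.

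Your linear bound $\dot x^k\le(B^k-D^k)x^k$ could not play that role: at $s=0$ the linear comparison system conserves $v^\top y$, so it does not converge to zero.
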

\begin{proof}
    We begin with statement~1. From Lemma~\ref{lem:pos:inv:D:control}, we know that $x^1(t)+x^2(t) \in [0,1]$ for all $t$. This implies that $(1-x_i^1(t)-x_i^2(t)) \leq (1-x_i^k(t))$  for $k = 1,2$. Therefore, for each $i \in [n]$, we can bound each scalar equation in \eqref{eq:bivirus:controlnew} as
 \begin{align} 
     \dot{x}_i^1(t) &\leq -\delta_i^1x_i^1(t)+(1-x_i^1(t))\sum_{j=1}^nb_{ij}^1x_j^1(t), \\
      \dot{x}_i^2(t) &\leq -\delta_i^2x_i^2(t)+(1-x_i^2(t))\sum_{j=1}^nb_{ij}^2x_j^2(t),
 \end{align}
 because $h_i^k(x_i^k(t))x_i^k(t) \geq 0$ and $(1-x_i^k(t)) \geq 0$ for all $t$. In fact, this means that the left side of \eqref{eq:bivirus:controlnew} is bounded as
\begin{subequations}\label{eq:bound}
\begin{align}
   \dot{x}^1(t) & \leq  \big(- D^1+(I-X^1(t))B^1 \big)x^1(t) \\
      \dot{x}^2(t) & \leq  \big(- D^2+(I-X^2(t))B^2 \big)x^2(t). 
\end{align} 
\end{subequations} 

Now, consider the single virus system
\begin{equation}\label{eq:sis_system}
    \dot y(t) = \big(-D^k+(I-Y(t))B^k\big)y(t).
\end{equation}
It is well known that \eqref{eq:sis_system} converges to $y = {\bf 0}$ if and only if $s(-D^k+B^k) \leq 0$, with the convergence being exponentially fast for a strict inequality, and at a rate of $t^{-1}$ for equality. See e.g.~\cite{ye2024rate}. Now, define the function
\begin{equation}
    g^k(x) = (-D^k + (I-X)B^k)x + M^kx,
\end{equation}
where $X = \diag(x)$, and we pick the 
scalar $M^k$ such that $M^k >  \max_{i} [\delta_i^k +\sum_{j=1}^n b_{ij}^k]$. One can verify that $x\leq y$ implies $g(x) \leq g(y)$, by noting that the Jacobian of $g(x)$ is a nonnegative matrix and then applying the mean-value theorem. This means that \eqref{eq:sis_system} satisfies Condition~Q in \cite[Section~2, pg. 256]{walter1971ordinary}, and thus the main theorem of \cite{walter1971ordinary} (see Section~3 in \cite{walter1971ordinary}) can be applied. 
In particular, defining $f^k(y) = (-D^k+(I-Y)B^k)y$, where $Y = \diag(y)$, observe that our above inequality bounds combine and culminate in the expression
\begin{equation}
    \dot x^k(t) - f(x^k(t)) \leq \dot y(t) - f(y(t)),
\end{equation}
and thus $x(t) \leq y(t)$ in $\mathcal D^k$. Thus, $x(t)\to{\bf 0}$ as $t\to\infty$.

For statement~2, sufficiency is immediately obtained from applying statement~1 for both $k=1$ and $k=2$, which ensures that $\lim_{t\to\infty} x^1(t) = x^2(t) = {\bf 0}$. To establish necessity, suppose to the contrary that $s(-D^1+B^1) > 0$. Now, consider \eqref{eq:bivirus:controlnew} with initial conditions $(x^1(0), {\bf 0})$, where $x^1(0) \in \mathcal D^k \setminus \{{\bf 0}\}$. The dynamics are now identical to the controlled single virus SIS model studied in \cite{ye2021_PH_TAC}, where it was established that $\lim_{t\to\infty} x^1(t) = \bar x^1$, where $\bar x^1 \gg {\bf 0}$ is the controlled endemic equilibrium. In other words, \eqref{eq:bivirus:controlnew} has at least one other equilibrium besides the DFE, being $(\bar x^1, {\bf 0})$.
\end{proof}
Based on Proposition~\ref{prop:dfe_bivirus_control}, it is clear that what remains of interest is the case where $s(-D^k+B^k)>0$ for $k=1,2$. Accordingly,  henceforth we invoke the following assumption. 

 \begin{assumption}\label{assum:spec:radii:more than:1}
 The parameters of system~\eqref{eq:bivirus:controlnew} are such that    $s(-D^k+B^k)>0$ for $k=1,2$.
 \end{assumption}

We have already observed that system~\eqref{eq:bivirus:controlnew} has exactly three equilibria on the boundary of the set $\mathcal D$ as in the control free case (see the text above Lemma~\ref{lem:equi_non-zero_nonone:1}). The next result observes that neither of the non-zero boundary equilibria is larger in magnitude than the corresponding non-zero boundary equilibrium of the uncontrolled system, i.e., system~\eqref{eq:bivirus}. Also, introducing control cannot convert an unstable boundary equilibrium to a stable one. 
The formal statement is as follows.
\begin{theorem}\label{thm:merged}
   Consider system~\eqref{eq:bivirus:controlnew} under Assumptions~\ref{assum:base}-\ref{assum:spec:radii:more than:1}.  Suppose that for all $i \in [n]$, the function $h_i^1
: [0, 1] \rightarrow \mathbb{R}_{\geq 0}$ is
bounded, smooth and monotonically nondecreasing, satisfying $h_i^1(0) = 0$ with at least one $h_i^1$ positive for positive values of its argument.  Let $\tilde{x}^1$ (resp. $\bar{x}^1$) 
denote the single-virus endemic equilibrium of virus~$1$ for system~\eqref{eq:bivirus:controlnew} (resp.~\eqref{eq:bivirus}).
    \begin{enumerate}[label=\roman*)]
        \item \label{q2} There holds $\tilde{x}^1 \ll \bar{x}^1$.
        \item\label{q3}  If $(\bar x^1,0)$ is an unstable equilibrium, then $(\tilde x^1,0)$ will be an unstable equilibrium. 
 
        \end{enumerate}
\end{theorem}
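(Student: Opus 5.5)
For part~\ref{q2}, I would work with the single-virus fixed-point equations and a comparison argument. The controlled equilibrium satisfies
\[
\big(-D^1-H^1(\tilde x^1)+(I-\tilde X^1)B^1\big)\tilde x^1=\mathbf{0},
\]
and the uncontrolled one satisfies $\big(-D^1+(I-\bar X^1)B^1\big)\bar x^1=\mathbf{0}$. Since $H^1(\tilde x^1)\tilde x^1\ge \mathbf{0}$, we get $f(\tilde x^1):=(-D^1+(I-\tilde X^1)B^1)\tilde x^1\geq \mathbf{0}$, so $\tilde x^1$ is a sub-solution of the uncontrolled single-virus SIS flow. Moreover, $f(\tilde x^1)>\mathbf{0}$ strictly in at least one entry: some $h_i^1$ is positive on positive arguments, and $\tilde x^1_i>0$ by Lemma~\ref{lem:equi_non-zero_nonone:1}. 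The uncontrolled single-virus SIS system is cooperative and irreducible; this is the Condition~Q argument used in the proof of Proposition~\ref{prop:dfe_bivirus_control}. Hence the trajectory starting at $\tilde x^1$ is nondecreasing and converges to $\bar x^1$ by Lemma~\ref{lem:yeLotka}, giving $\tilde x^1\leq\bar x^1$.

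Strictness comes from strong monotonicity: the trajectory from $\tilde x^1$ is strictly above $\tilde x^1$ in all entries for $t>0$. An alternative is direct. Set $\tilde x^1=\bar x^1$ in some entry, note that the other entries are $\le$, and examine row $i$ at a maximizing index of $\tilde x^1_j/\bar x^1_j$. The scaling trick $\tilde x^1\le \theta\bar x^1$ with minimal $\theta$ would give $\theta<1$ via irreducibility. I would use this scaling argument to avoid relying on the flow.

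For part~\ref{q3}, I would use the block structure of the Jacobian~\eqref{eq:jacobian} at a boundary point $(x^1,\mathbf{0})$. There $J_{21}=\mathbf{0}$, so the Jacobian is block upper triangular. Its eigenvalues are those of $J_{11}$ together with those of
\[
J_{22}=-D^2+(I-X^1)B^2
\]
(note $H^2(\mathbf 0)=0$). The block $J_{11}$ is Hurwitz at both $(\bar x^1,\mathbf 0)$ and $(\tilde x^1,\mathbf 0)$, since these equilibria are stable for the single-virus dynamics; this is the known result of \cite{ye2021_PH_TAC} for the controlled case. The $(x^1,\mathbf0)$ equilibrium being unstable therefore reduces to $s(-D^2+(I-X^1)B^2)>0$; I would treat the marginal case $s=0$ as unstable, or restrict to the generic case.

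Then, by part~\ref{q2}, $(I-\tilde X^1)B^2\geq (I-\bar X^1)B^2$ entrywise, with strict inequality in rows where $B^2$ is nonzero. Monotonicity of the spectral abscissa of Metzler matrices (the Perron--Frobenius fact recalled in the Notation section, shifted by a multiple of $I$) gives
\[
s(-D^2+(I-\tilde X^1)B^2)\ge s(-D^2+(I-\bar X^1)B^2)>0,
\]
and indeed strict by irreducibility. Hence $(\tilde x^1,\mathbf0)$ is unstable.

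The main obstacle is the precise meaning of instability for $(\bar x^1,\mathbf 0)$. If ``unstable'' allows the nonhyperbolic case $s=0$, then part~\ref{q3} needs the strict spectral inequality, which requires irreducibility of $B^2$ and the strict decrease from part~\ref{q2}. I would make that strictness explicit via the Perron eigenvector: for an irreducible Metzler $M\le M'$ with $M\ne M'$, we have $s(M)<s(M')$. With that, a marginal $s=0$ for the uncontrolled equilibrium becomes $s>0$ for the controlled one, giving linear instability in all cases.
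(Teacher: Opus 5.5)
Your proposal is correct. Part~ii) is essentially the paper's argument: the block-triangular Jacobian at $(x^1,\mathbf 0)$, $J_{11}$ Hurwitz, and a comparison of $s(-D^2+(I-X^1)B^2)$ at $\tilde x^1$ versus $\bar x^1$. Your use of strict monotonicity of $s(\cdot)$ for irreducible Metzler matrices is slightly more careful than the paper, which asserts the strict inequality and equates instability with $s>0$ without comment. With it, $s\ge 0$ at $(\bar x^1,\mathbf 0)$ already forces $s>0$ at $(\tilde x^1,\mathbf 0)$.

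Part~i) is where your route differs. The paper argues purely algebraically. With $r=\max_i\tilde x_i/\bar x_i$ and $j$ a maximizing index, it chains $r\delta_j\bar x_j\le(\delta_j+h_j(\tilde x_j))\tilde x_j=(1-\tilde x_j)(B\tilde x)_j\le(1-r\bar x_j)\,r(B\bar x)_j$. If $r>1$, the saturation factor $1-r\bar x_j<1-\bar x_j$ gives a contradiction. If $r=1$, equality forces $h_j(\tilde x_j)=0$ and $(B(\bar x-\tilde x))_j=0$ on the argmax set. That set is a proper subset of $[n]$ because $H(\tilde x)\neq 0$, so this contradicts irreducibility.

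Your main route instead treats $\tilde x^1$ as a strict sub-equilibrium of the cooperative uncontrolled flow. The trajectory from it then increases to $\bar x^1$, strictly in every entry for $t>0$ by irreducibility of the Jacobian while $y(t)\ll\mathbf 1$. This is valid and gives a clean dynamical interpretation, but it imports more machinery: comparison theory, uniqueness of the uncontrolled endemic equilibrium, and convergence via Lemma~\ref{lem:yeLotka} with $H\equiv 0$. The paper's ratio argument is self-contained and never needs uniqueness; it shows that any positive controlled equilibrium lies strictly below any positive uncontrolled one.

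Your fallback ``scaling'' sketch is the paper's route, but it is incomplete as written. Irreducibility only excludes $\theta=1$; excluding $\theta>1$ needs the inequality chain above with the factor $(1-\theta\bar x_j)$. If you drop the flow argument, you must supply that step.

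Finally, both you and the paper infer that $J_{11}$ is Hurwitz from stability of the single-virus equilibrium, which does not follow in general. Here it can be checked directly. The matrix $M=-D^1-H^1(\tilde x^1)+(I-\tilde X^1)B^1$ is irreducible Metzler with $M\tilde x^1=\mathbf 0$ and $\tilde x^1\gg\mathbf 0$, so $s(M)=0$. Since $J_{11}$ equals $M$ minus the positive diagonal matrix $\diag(B^1\tilde x^1)$ and a nonnegative diagonal term, $s(J_{11})<0$.
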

Note that the same conclusion will follow if the choice of virus 1 above is replaced by virus 2. 

\textit{Proof:}
(Part (i)) We will prove the claim for $k=1$. Dropping superscripts, we have
\begin{align}
[-D+(I-\bar X)B]\bar x&={\bf 0}, \\ [-D+(I-\tilde X)B-H(\tilde x)]\tilde x&={\bf 0}
\end{align}
We will use the facts that ${\bf{0}}\ll\bar x,\tilde x \ll {\bf{1}}$ (see Lemma \ref{lem:equi_non-zero_nonone:1}). Define
\begin{equation}
    r=\max_i\frac{\tilde x_i}{\bar x_i}\quad J=\{j:j=\arg\max\frac{\tilde x_i}{\bar x_i}\}
\end{equation}
Part (i) is proved by establishing that $r<1$. Now, these definitions, together with the entry-wise expressions of the equilibrium equations above, imply that 
\[\tilde x_j=r\bar x_j,j\in J\quad\tilde x_i<r\bar x_i, i\notin J\quad\mbox{and}\quad B\tilde x\leq rB\bar x
\]
Consider for any $j\in J$ now the inequality chain:
\begin{align*}
r\delta_j\bar x_j &\leq r(\delta_j+h_j(\tilde x_j))\bar x_j=(\delta_j+h_j(\tilde x_j))\tilde x_j\\\nonumber
&=(1-\tilde x_j)(B\tilde x)_j \leq (1-r\bar x_j)(rB\bar x)_j
\end{align*}
If $r>1$, we can push the last inequality further to obtain a contradiction:
\begin{align*}
    r\delta_j\bar x_j\leq (1-r\bar x_j)(rB\bar x)_j<r(1-\bar x_j)(B\bar x)_j=r\delta_j\bar x_j
\end{align*}
Hence we have established that $\tilde x\leq \bar x$. To obtain a further contradiction, assume now $r=1$; then tracing through the inequality chain again, now with the assumption that $\bar x_j=\tilde x_j$, yields $h_j(\tilde x_j)=0$ and $(B\tilde x)_j=(B\bar x)_j$. Since $H(\tilde x)$ cannot be zero, the set $J$ cannot coincide with $[n]$. This means that $y:=\bar x-\tilde x$ is a nonnegative but not positive vector with the property that if the $j$-th entry is zero, so is the $j$-th entry of $By$.   This contradicts the irreducibility of $B$. Hence $r<1$.

(Part (ii)) Using \eqref{eq:jacobian} and \eqref{eq:jacobianblocks}, we can obtain the Jacobian matrix associated with the boundary equilibrium $(\tilde x^1, {\bf 0})$. It is
\begin{equation*}
    P(\tilde x^1,{\bf{0}}))=\begin{bmatrix}
        P_{11}(\tilde x^1,0)&-\diag(B^1\tilde x^1)\\
        {\bf 0}&-D^2+(I-\tilde X^1)B^2
    \end{bmatrix}
    \end{equation*}
where
\begin{align}
    P_{11}(\tilde x^1,{\bf{0}})=&-D^1+(I-\tilde X^1)B^1-\diag(B^1\tilde x^1)\nonumber\\
            &-H^1(\tilde X^1)-H^1(\tilde x^1)'\tilde X^1\nonumber
\end{align}
Now the single virus problem for virus 1 has a single endemic equilibrium that is precisely $\tilde x^1$,  as noted in Lemma \ref{lem:yeLotka}, this equilibrium is stable. Since $P_{11}(\tilde x^1,{\bf{0}})$ is precisely the Jacobian associated with the single virus equation evaluated at this equilibrium, it follows that the eigenvalues of $P_{11}(\tilde x^1,{\bf{0}})$ all have negative real parts, independently of the precise form of $H$. Because of the block triangular structure of $P$, the stability of $(\tilde x^1,{\bf{0}})$ is therefore determined by the eigenvalues of $P_{22}(\tilde x^1)=-D^2+(I-\tilde X^1)B^2$. Likewise, the stability of $(\bar x^1,{\bf{0}})$ is determined by the eigenvalues of $P_{22}(\bar x^1)=-D^2+(I-\bar X^1)B^2$. It is evident that
\[
P_{22}(\tilde x^1)=P_{22}(\bar x^1)+(\bar X^1-\tilde X^1)B^2
\]
Now $P_{22}(\bar x^1)$ and $P_{22}(\tilde x^1)$ are Metzler matrices. 
Note that they differ by a nonnegative matrix, since $\tilde{x}^1 \ll \bar{x}^1$ (from statement~i)). Therefore 
it follows that their spectral abscissas obey $s(P_{22}(\tilde x^1))>s(P_{22}(\bar x^1))$. Instability of $(\bar x^1,0)$ as an equilibrium of the bivirus system implies that $s(P_{22}(\bar x^1)>0$, implying then $s(P_{22}(\tilde x^1)>0$ or instability of the controlled bivirus system equilibrium $(\tilde x^1,{\bf{0}})$.~\qed

The following is immediate.
\begin{corollary}
Statement~\ref{q3} of Theorem~\ref{thm:merged}  guarantees that an unstable boundary equilibrium of system~\eqref{eq:bivirus} remains so even when a distributed controller of the form given in~\eqref{eq:dist.contrller} acts on system~\eqref{eq:bivirus}.  
\end{corollary}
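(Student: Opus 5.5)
The statement to be proved is the Corollary. It is a direct specialization of statement~\ref{q3} of Theorem~\ref{thm:merged}, so the plan is simply to check that its hypotheses are met and translate its conclusion.

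First, fix an unstable boundary equilibrium of the uncontrolled system~\eqref{eq:bivirus}. Under Assumption~\ref{assum:spec:radii:more than:1}, the nonzero boundary equilibria are $(\bar x^1,\mathbf{0})$ and $(\mathbf{0},\bar x^2)$. The DFE is always unstable here, since its Jacobian contains the diagonal blocks $-D^k+B^k$, and these have positive spectral abscissa. That remains true under control, because the control terms vanish at $\mathbf{0}$ together with their contribution to the Jacobian ($h_i^k(0)=0$ and $X^k=0$). So the DFE case is handled trivially, and only the single-virus endemic equilibria need attention.

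Second, by symmetry (the remark after Theorem~\ref{thm:merged}), take the unstable equilibrium to be $(\bar x^1,\mathbf{0})$. If some $h_i^1$ is positive for positive arguments, statement~\ref{q3} applies directly and gives instability of $(\tilde x^1,\mathbf{0})$. In the degenerate case where every $h_i^1\equiv 0$, the controlled virus-1 equation at $x^2=\mathbf{0}$ coincides with the uncontrolled one, so $\tilde x^1=\bar x^1$ by the uniqueness in Lemma~\ref{lem:yeLotka}.

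The only point needing care is the effect of any control on virus~2. The stability of $(\tilde x^1,\mathbf{0})$ is governed by the block $P_{22}$, evaluated with $x^2=\mathbf{0}$. There $H^2(\mathbf{0})=0$ and $(H^2)'X^2=0$, so $h^2$ does not appear in $P_{22}$, and the argument in part (ii) carries over unchanged. The case $\tilde x^1=\bar x^1$ gives identical Jacobian blocks, so instability is preserved trivially.

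This case bookkeeping is the only obstacle; no new estimate is required.
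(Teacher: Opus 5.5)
Your proposal is correct and follows the paper's route: the paper records the corollary as immediate from statement~\ref{q3} of Theorem~\ref{thm:merged}, and your checks that the DFE is unstable for both systems and that $H^2$ drops out of the Jacobian at $x^2=\mathbf{0}$ only make explicit what the paper leaves implicit. Two caveats on your optional degenerate case. First, ``some $h_i^1$ positive for positive arguments'' versus ``all $h_i^1\equiv 0$'' is not exhaustive, since smooth threshold controllers vanishing on some $[0,a]$ fit neither; split instead on whether $H^1(\tilde x^1)=0$, which gives $\tilde x^1=\bar x^1$ by uniqueness, or $H^1(\tilde x^1)\neq 0$, where the proof of Theorem~\ref{thm:merged} runs unchanged. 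Second, when $\tilde x^1=\bar x^1$ but the vector fields still differ (e.g.\ $h^2\not\equiv 0$), identical Jacobians transfer instability only in the linearized sense $s(P_{22}(\bar x^1))>0$, which is the sense the paper itself uses.
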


The theorem and this corollary are consistent with what was observed in the $n=1$ case with explicit analysis. (Note for the $n=1$ case, we did not look at the effect of including control on the virus whose endemic equilibrium was unstable). In particular, we saw that the use of control on the dominant virus could reduce the infected fraction at steady state, but beyond a certain point, the `winner takes all' idea fails, and while a stable coexistence equilibrium occurs, the single virus equilibrium of the dominant virus becomes unstable. 

Just as in the single population case, even for the $2-$ population case, we can find a situation where a boundary equilibrium loses stability and a stable coexistence equilibrium comes into existence with a sufficiently strong control. 
    Consider the following example of a 2-population bivirus system, where $D^1=D^2=I$, $B^1=[\begin{smallmatrix}
        1.6&1\\1&1.6
    \end{smallmatrix}]$, and $B^2=[\begin{smallmatrix}
        2.1&0.885\\
        1.885&1.1
    \end{smallmatrix}]$. Exact details are reported in \cite[Case~4, S338]{ye2021convergence}, but in summary, the single-virus endemic equilibrium corresponding to virus~1 is globally stable, and there is no coexistence equilibrium; see Figure~\ref{fig:n2_uncontrolled_case4}. We 
    set $H^1(x^1)=\alpha x^1,H^2(x^2)=0$ for gain $\alpha>0$. As we increase $\alpha$, the single-virus endemic equilibrium of virus~1 becomes unstable, and a coexistence equilibrium emerges and becomes  attractive; see Figures \ref{fig:n2_alpha05_case4}  and \ref{fig:n2_alpha5_case4}. Additional simulations reveal that for several initial conditions, the dynamics do converge to this coexistence equilibrium, thus suggesting that it is globally attractive.  However, our simulations also indicate that no matter how much we increase $\alpha$, the dynamics do not converge to the single-virus endemic equilibrium of virus~2.
\begin{figure*}
    \centering
        \subfloat[Uncontrolled dynamics]{\includegraphics[width=0.3\linewidth]{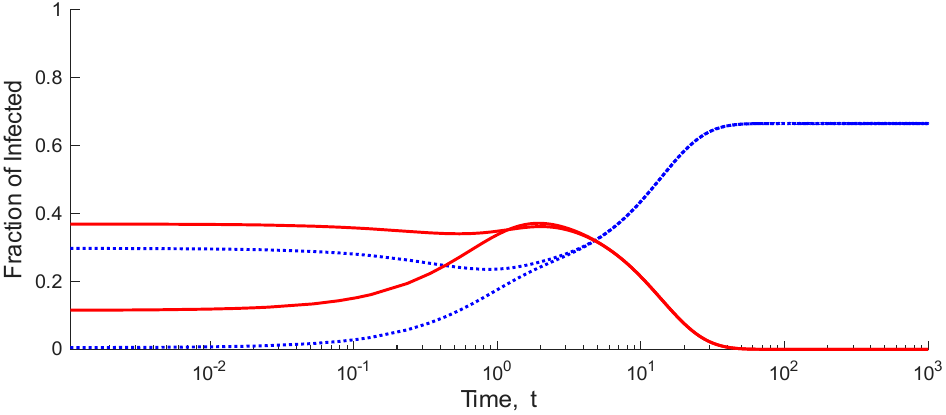}\label{fig:n2_uncontrolled_case4}}
        \subfloat[$\alpha = 0.5$]{\includegraphics[width=0.3\linewidth]{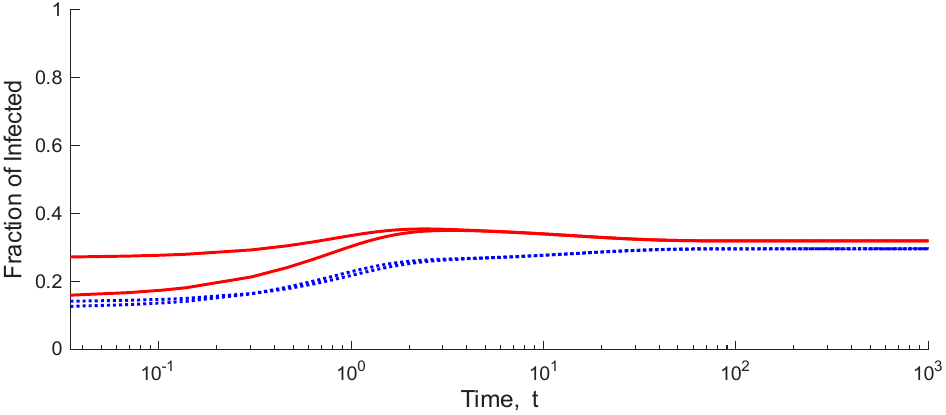}\label{fig:n2_alpha05_case4}}%
        \subfloat[$\alpha = 5$]{\includegraphics[width=0.3\linewidth]{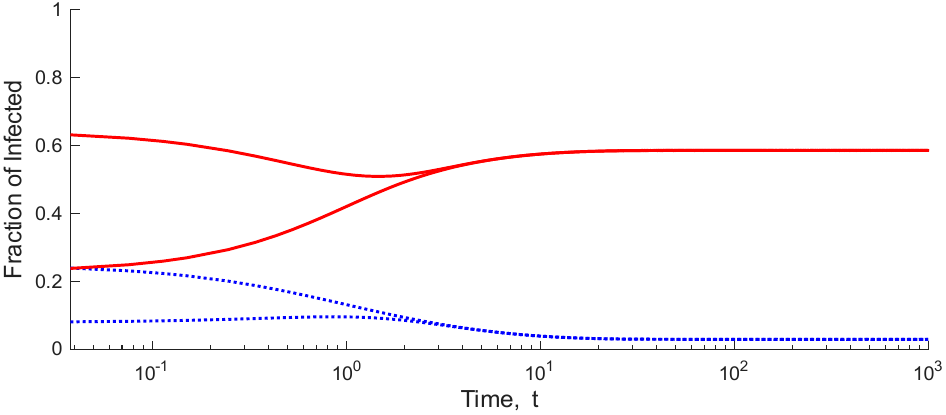}\label{fig:n2_alpha5_case4}}
    \caption{Simulations for $n = 2$ for \cite[Case 4, S338]{ye2021convergence}. Virus 1 (blue line), Virus 2 (red line).\label{fig:n2_case4} }
\end{figure*}

As a second simulation with two populations, we use 
\cite[Case~2, S337]{ye2021convergence}, except with $B^1$ and $B^2$ switched. In the uncontrolled dynamics, there are actually two locally stable boundary equilibria (namely, $(\bar{x}^1, \textbf{0})$ and $(\textbf{0}, \bar{x}^2)$) and one unstable coexistence equilibrium, as per Fig.~\ref{fig:n2_uncontrolled}. Thus, in contrast to the single population case, one does not necessarily have a `winner takes all' boundary equilibrium in the control-free multipopulation case.  However, when we set $\alpha = 0.1$ (and any larger $\alpha$), we instead always converge to the virus~2 boundary equilibrium (i.e., $(\textbf{0}, \tilde{x}^2)$) (unless we are already at another equilibrium), and one can check that the virus~1 boundary equilibrium (i.e., $( \tilde{x}^1, \textbf{0})$) becomes unstable; see Figure~\ref{fig:n2_alpha01}.

\begin{figure}
    \centering
        \subfloat[Uncontrolled dynamics]{\includegraphics[width=0.5\columnwidth]{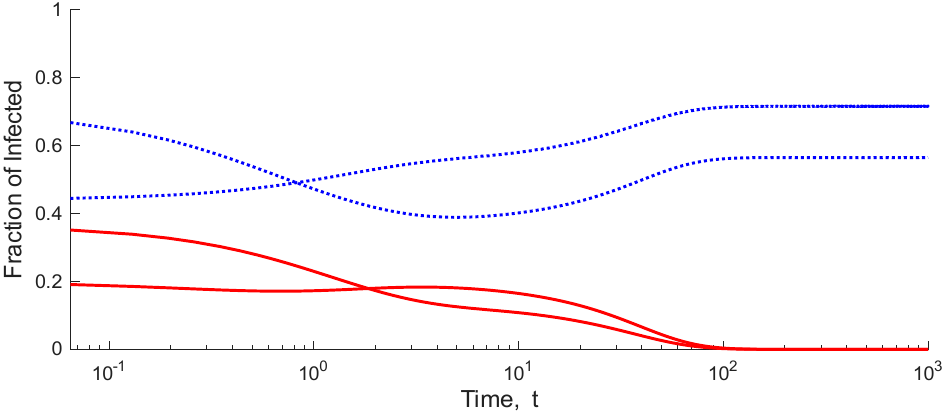}\label{fig:n2_uncontrolled}}%
        \subfloat[$\alpha = 0.1$]{\includegraphics[width=0.5\columnwidth]{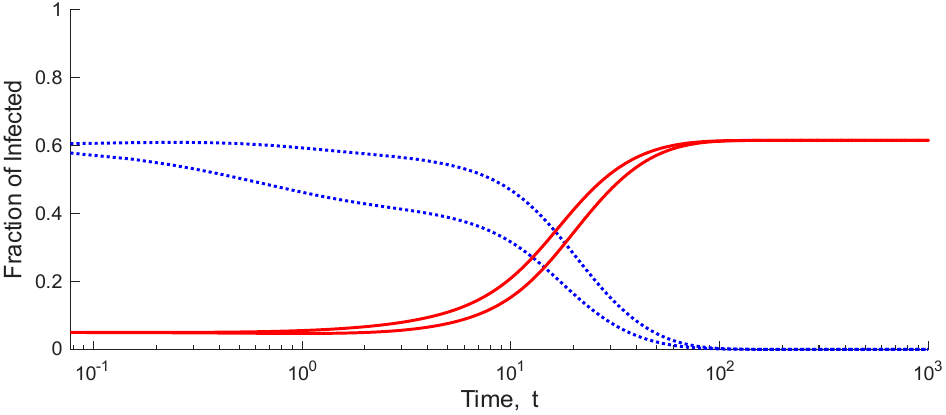}\label{fig:n2_alpha01}}
    \caption{Simulations for $n = 2$ for \cite[Case~2, S337]{ye2021convergence}. Virus~1 (blue), virus~2 (red). 
    \label{fig:n2_case2} }
\end{figure}

 \section{Conclusion}\label{sec:conclusion}
The  paper focused on the classic bivirus networked SIS system that is acted upon by a class of  decentralized state feedback controllers. 
We showed that 
the controlled bivirus SIS system 
exhibits novel features vis-a-vis the uncontrolled bivirus SIS system. In particular, for the single population case, the uncontrolled bivirus system does not admit isolated coexistence equilibria, while the controlled bivirus system can. Further, for general networked systems, the given controller cannot stabilize an unstable boundary equilibrium of the uncontrolled bivirus system.  Indeed, application of control can actually introduce a a form of instability, replacing a stable `winner takes all' equilibrium by a stable coexistence equilibrium through a transcritical bifurcation. 
Our ongoing work seeks to provide a deeper understanding of the differences between the uncontrolled system and the controlled system, specifically, by providing analytical expressions for equilibria and Jacobians when using simultaneous control of both viruses, and possibly extend the proposed controller for time-varying SIS epidemics.
 
\bibliography{References}
\end{document}